\colorlet{citeblue}{blue!50!black}
\colorlet{linkred}{red!50!black}
\let\NAT@parse\undefined
\newcommand{\rev}[1]{{#1}}
\newcommand{\revv}[1]{{#1}}
\newcommand{\seqq}[1]{\underline{#1}}
\newtheorem{defn}{Definition}[section]
\newtheorem{proposition}[defn]{Proposition}
\newtheorem{lemma}[defn]{Lemma}
\newtheorem{remark}[defn]{Remark}
\newtheorem{theorem}[defn]{Theorem}
\newtheorem{assumption}[defn]{Assumption}
\newcommand{\mytitle}{\textbf{Accepted version.} To appear in \emph{IEEE Control Systems Letters}.
\copyright 2023 IEEE. Personal use of this material is permitted. Permission
from IEEE must be obtained for all other uses, in any current or future
media, including reprinting/republishing this material for advertising or
promotional purposes, creating new collective works, for resale or
redistribution to servers or lists, or reuse of any copyrighted component of
this work in other works.}
\begin{document}
\newcommand{\Pbb}{\mathcal{P}} %
\newcommand{\dkm}{d_{\text{KR}}} %
\newcommand{\convmode}[1]{\overset{#1}{\longrightarrow}}
\newcommand{\Pone}{\mathcal{P}_1}
\newcommand{\Permutations}{\mathcal{S}}
\newcommand{\EmpMeas}[2]{\mathcal{E}_{#1}(#2)} %
\newcommand{\muhat}{\hat \mu} %
\newcommand{\KME}{\Pi}
\newcommand{\KMEk}{\Pi_k} %
\newcommand{\KMEhat}{\hat{\Pi}_k} 
\newcommand{\MMD}{\gamma}
\newcommand{\MMDk}{\gamma_k} %
\newcommand{\decoM}[1]{{{#1}^{[M]}}} %
\newcommand{\deco}[2]{{{{#1}^{[#2]}}}}
\newcommand{\seq}[1]{\underline{#1}} %
\newcommand{\setc}[1]{\mathbb{#1}} %

\newcommand{\supp}{\mathrm{supp}} %
\newcommand{\openNeighbourhood}{\mathcal{U}}
\newcommand{\closedBall}{\bar{\mathbf{B}}}
\newcommand{\openBall}{\mathbf{B}}

\newcommand{\cK}{\mathcal{K}}
\newcommand{\cKi}{\mathcal{K}_\infty}
\newcommand{\cKL}{\mathcal{KL}}

\newcommand{\cf}[1]{{\color{blue}#1}}

\newcommand{\N}{\mathbb{N}} %
\newcommand{\Z}{\mathbb{Z}} %
\newcommand{\R}{\mathbb{R}} %
\newcommand{\C}{\mathbb{C}} %
\newcommand{\K}{\mathbb{K}} %

\newcommand{\Rnn}{\mathbb{R}_{\geq 0}} %
\newcommand{\Rp}{\mathbb{R}_{>0}} %
\newcommand{\Rx}{\mathbb{R} \cup \{ \infty \}} %
\newcommand{\Np}{\mathbb{N}_+} %
\newcommand{\Npe}{\Np\cup\{\infty\}} %
\newcommand{\Ne}{\N\cup\{\infty\}} %

\title{Mean field limits for discrete-time dynamical systems via kernel mean embeddings}
\author{Christian Fiedler, Michael Herty, and Sebastian Trimpe
\thanks{
This work is funded in part under the Excellence Strategy of the Federal Government and the Länder (G:(DE-82)EXS-SF-SFDdM035), which the authors gratefully acknowledge.}
\thanks{C. Fiedler and S. Trimpe are with the Institute for Data Science in Mechanical Engineering (DSME), RWTH Aachen University, Aachen, Germany (e-mail: \{fiedler,trimpe\}@dsme.rwth-aachen.de)}
\thanks{M. Herty is with the Institute for Geometry and Practical Mathematics (IGPM), RWTH Aachen University, Aachen Germany (e-mail: herty@igpm.rwth-aachen.de)}
}

\maketitle
\thispagestyle{fancy} %

\begin{abstract}
Mean field limits are an important tool in the context of large\revv{-}scale dynamical systems, in particular, when studying multiagent and interacting particle systems. While the continuous-time theory is well-developed, few works have considered mean field limits for deterministic discrete-time systems, which are relevant for the analysis and control of large\revv{-}scale discrete\revv{-}time multiagent system.
We prove existence results for the mean field limit of very general discrete-time control systems, for which we utilize kernel mean embeddings. These results are then applied in a typical optimal control setup, where we establish the mean field limit of the relaxed dynamic programming principle.
Our results can serve as a rigorous foundation for many applications of mean field approaches for discrete-time dynamical systems.
\end{abstract}

\section{Introduction}
\label{sec:introduction}
Multiagent systems (MAS) or interacting particle systems (IPS) play an important role in various fields of science and engineering \cite{bellomo2017active,bullo2020lectures}. 
In many interesting applications, \rev{like \revv{controlling} large swarms of robots \cite{cui2023scalable}}, the number of agents or particles becomes very large, which poses challenges for the modeling, simulation and control of such systems.
A frequently used approach is to go from the microscopic level -- modeling each agent -- to the mesoscopic level, modeling only the distribution of the agents over the state space \cite{pareschi2013interacting,naldi2010mathematical}.
One way to do this is the mean field limit (MFL), which corresponds formally to the limit of infinitely many agents.
It has been very successfully employed for IPS modeled by ordinary differential equations (ODEs), leading to partial differential equations (PDEs), usually of Vlasov-Fokker-Planck type \rev{\cite{golse2016dynamics,pareschi2013interacting}}.
Recently, also the MFL of controlled systems has been considered, see for example \cite{herty2015mean,fornasier2019mean}, resulting in mean field (optimal) control.
However, the vast majority of past work has focused on continuous-time systems modeled by ODEs and the corresponding mean field PDEs, instead of discrete-time MAS like \cite{cucker2007emergent,yin2021convergence}, \revv{which are particularly relevant for engineering applications \cite{bullo2020lectures}}.
Here, we consider the latter type of systems and their MFL, as well as corresponding control problems.

\revv{First, we tackle the challenging problem of existence of mean field limits.
In contrast to the continuous-time case, where a certain PDE is connected with the microscopic ODEs, in the discrete-time case we are concerned with the MFL of the transition functions of the microscopic dynamics.}
For scalar-valued functions, a well-known existence result from the literature on mean field games is available (cf. Proposition \ref{prop:mflFuncs} below), which can be used for the MFL of stage cost functions in optimal control \cite{herty2017performance}.
However, no comparable result is available for transition functions, i.e., functions with a growing number of inputs and outputs.
Our first main contributions (in Section \ref{sec:mfl}) are existence results for MFLs of functions arising in the MFL of discrete-time dynamical systems. 
To the best of our knowledge, these are the first such existence results suitable for general discrete-time control systems, and they are also of independent interest due to their generality.
In order to prove these results, we use kernel mean embeddings (KMEs) which map probability distributions into reproducing kernel Hilbert spaces (RKHSs), cf. Section \ref{sec:prelims}, which collects all necessary preliminaries. 
\rev{Introducing KMEs as a tool in this context, is another major contribution of this work.}
\rev{In Section \ref{sec:dtds}, we apply these results to discrete-time control systems describing MAS, considering also stage costs with a view towards optimal control applications. As our final contribution, we prove a MFL result for relaxed dynamic programming, as used in the theory of nonlinear model predictive control (NMPC) \cite{grune2017nonlinear}.}

\revv{Our results provide a solid theoretical foundation for mean field approaches for the control and simulation of large-scale discrete-time MAS, which are increasingly common in application areas like cooperative robotics, power systems or social dynamics.}

\section{Preliminaries} \label{sec:prelims}
\emph{Sets}
For a set $X\not=\emptyset$ and a tuple $\vec x \in X^M$, $M\in\Np$, we denote by $x_i$ the $i$-th component of $\vec x$.
Denote by $\Permutations_M$ the set of all permutations of $\{1,\ldots,M\}$.
Let now $\vec x \in X^M$. 
Given $\sigma\in\Permutations_M$, we define $\sigma \vec x = (x_{\sigma(1)}, \ldots, x_{\sigma(M)})$.
Furthermore, we define the empirical measure with $M$ atoms $x_1,\ldots,x_M$ as $\muhat[\vec x] = \frac{1}{M} \sum_{m=1}^M \delta_{x_m}$,
where $\delta_x$ is the Dirac measure at atom $x\in X$,
and denote by $\EmpMeas{M}{X}$
the set of all empirical measures with $M$ atoms.

\emph{RKHS\rev{s}}
A kernel on $X$ is a function $k:X \times X \rightarrow \R$ such that $(k(x_i,x_j))_{i,j}$ is symmetric positive semidefinite for all $\vec x \in X^M$ and all $M\in\Np$.
Given a Hilbert space $(H,\langle\cdot,\cdot\rangle_H)$ consisting of real-valued functions on $X$, we say that $k$ is a reproducing kernel for $H$ if $k(\cdot,x)\in H$ for all $x\in X$, and $f(x)=\langle f, k(\cdot,x)\rangle_H$ for all $f\in H$ and $x\in X$. 
If $H$ has a reproducing kernel, we call $H$ a reproducing kernel Hilbert space (RKHS). 
It is well-known that for every kernel there exists a unique RKHS that has the given kernel as its reproducing kernel, and we denote this RKHS by $(H_k,\langle\cdot,\cdot,\rangle_k)$.
If $k$ is a kernel on $X$, we define
\begin{equation*}
    d_k: X \times X \rightarrow \Rnn,\quad d_k(x,x') = \|k(\cdot,x) - k(\cdot,x')\|_k,
\end{equation*}
which is a (semi)metric on $X$, called the kernel (semi)metric (associated with $k$).
For more details \rev{and proofs of all of these facts}, see \cite[Chp~4]{steinwart2008support}.

\emph{Probability measures} 
Let now $(X,d_X)$ be a metric space and denote by $\Pbb(X)$ the set of Borel probability measures on $X$, which we endow with the topology of weak convergence of probability measures. 
Recall that for $\mu_n,\mu\in\Pbb(X)$, we say that $\mu_n \rightarrow \mu$ weakly if for all bounded and continuous $\phi:X\rightarrow\R$, we have $\lim_{n\rightarrow\infty} \int_X \phi(x)\mathrm{d}\mu_n(x) \rightarrow \int_X \phi(x)\mathrm{d}\mu(x)$.
The topology of weak convergence can be metrized by the Kantorowich-Rubinstein metric $\dkm$. %
Note that for $X$ separable $\dkm$ corresponds to the 1-Wasserstein distance, and if $X$ is compact, then $(\Pbb(X),\dkm)$ is also compact, cf. \cite[Chp~11]{dudley2018real}.

\emph{KME and MMD} 
Let $k:X\times X \rightarrow \R$ be a Borel-measurable kernel. Furthermore, assume that it is bounded, i.e., $\sup_{x,x'\in X} |k(x,x')| < \infty$.
In this case, $X \ni x \mapsto k(\cdot,x)\in H_k$ is Bochner-integrable w.r.t. every $\mu\in\Pbb(X)$, and we define
\begin{equation}
    \KMEk: \Pbb(X) \rightarrow H_k, \quad \mu \mapsto \int_X k(\cdot,x)\mathrm{d}\mu.
\end{equation}
For $\mu\in\Pbb(X)$, we call $\KMEk(\mu)$ the kernel mean embedding (KME) of $\mu$ in $H_k$.\footnote{This terminology is explained by the fact that in the present setting, for all $\mu\in\Pbb(X)$, any $f\in H_K$ is $\mu$-integrable, and $\int_X f(x)\mathrm{d}\mu=\langle f, \KMEk(\mu)\rangle_k$.}
If the map $\KMEk$ is injective, then we call $k$ characteristic, \rev{cf. \cite{sriperumbudur2011universality} for many examples and conditions for this property}.
To simplify notation, define additionally $\KMEhat(\vec x) = \KMEk(\muhat[\vec x])$ for $\vec x \in X^M$.
For convenience, we record the following simple fact.
\begin{lemma}
The set $\KMEk\left(\Pbb(X)\right)$ is convex.
\end{lemma}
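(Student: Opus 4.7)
The plan is to exploit two elementary facts: (i) the domain $\Pbb(X)$ is itself convex under the usual mixture operation on probability measures, and (ii) the map $\KMEk$ is affine (in fact linear in the measure argument) because it is defined as a Bochner integral of a fixed $H_k$-valued integrand against $\mu$.

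Concretely, I would fix $\KMEk(\mu_1), \KMEk(\mu_2) \in \KMEk(\Pbb(X))$ with $\mu_1,\mu_2 \in \Pbb(X)$ and any $\lambda \in [0,1]$, and then exhibit a Borel probability measure $\nu \in \Pbb(X)$ whose image under $\KMEk$ equals $\lambda \KMEk(\mu_1) + (1-\lambda)\KMEk(\mu_2)$. The natural candidate is the mixture $\nu := \lambda \mu_1 + (1-\lambda)\mu_2$, which is clearly again a Borel probability measure on $X$, hence an element of $\Pbb(X)$.

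It then remains to verify that $\KMEk(\nu) = \lambda \KMEk(\mu_1) + (1-\lambda)\KMEk(\mu_2)$. This follows directly from linearity of the Bochner integral with respect to the underlying (signed) measure: for the fixed integrand $x \mapsto k(\cdot,x) \in H_k$, which is Bochner-integrable with respect to any bounded positive Borel measure (by the uniform bound on $k$ used in the definition of $\KMEk$), one has
\begin{equation*}
\int_X k(\cdot,x)\,\mathrm{d}(\lambda\mu_1 + (1-\lambda)\mu_2)(x) = \lambda \int_X k(\cdot,x)\,\mathrm{d}\mu_1(x) + (1-\lambda)\int_X k(\cdot,x)\,\mathrm{d}\mu_2(x).
\end{equation*}
Plugging in the definition of $\KMEk$ closes the argument.

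There is really no hard step here; the only thing to be careful about is justifying the linearity of the Bochner integral in the measure, which is standard and follows either by approximating $k(\cdot,x)$ by simple functions or by applying the defining identity $\langle \KMEk(\mu), f\rangle_k = \int_X f\,\mathrm{d}\mu$ (valid for every $f\in H_k$) to both sides and invoking injectivity of the Riesz representation. Accordingly, the proof should be only a couple of lines.
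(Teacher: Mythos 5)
Your proposal is correct and follows essentially the same route as the paper, which simply notes that $\KMEk$ is linear (affine) in the measure and that $\Pbb(X)$ is convex under mixtures; you merely spell out the verification of $\KMEk(\lambda\mu_1+(1-\lambda)\mu_2)=\lambda\KMEk(\mu_1)+(1-\lambda)\KMEk(\mu_2)$ via linearity of the Bochner integral, which is a valid and standard justification.
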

\begin{proof}
\revv{Since $\KMEk$ is a linear map and $\Pbb(X)$ is a convex set, also $\KMEk\left(\Pbb(X)\right)$ is convex.}
\QED
\end{proof}
Finally, define the maximum mean discrepancy (MMD) w.r.t. $k$ by
\begin{equation*}
    \MMDk: \Pbb(X)\times\Pbb(X) \rightarrow \Rnn, \quad
    \MMDk(\mu,\nu) = \| \KMEk(\mu)-\KMEk(\nu) \|_k.
\end{equation*}
$\MMDk$ is a semimetric on $\Pbb(X)$, and if $k$ is characteristic, $\MMDk$ is a metric.
\begin{lemma}
    If $(X,d_k)$ is a compact metric space, then $\KMEk\left(\Pbb(X)\right)$ is compact.
\end{lemma}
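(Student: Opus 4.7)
The plan is to exhibit $\KMEk(\Pbb(X))$ as the continuous image of a compact set, leveraging the hypothesis that $(X, d_k)$ is compact.

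As a first step, I would verify that $k$ is bounded and jointly continuous on $(X, d_k) \times (X, d_k)$. Boundedness follows from $\sqrt{k(x,x)} = \|k(\cdot,x)\|_k$ together with boundedness of $d_k$ on the compact space. Joint continuity is a short Cauchy--Schwarz calculation in $H_k$, giving
\[
|k(x,x') - k(y,y')| \leq \sqrt{k(x',x')}\, d_k(x,y) + \sqrt{k(y,y)}\, d_k(x',y'),
\]
so in particular $k$ is Borel measurable, the KME is defined as in the preliminaries, and compactness of $(\Pbb(X),\dkm)$ then follows from the cited Kantorowich--Rubinstein result.

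The main step is to show that $\KMEk:(\Pbb(X),\dkm) \to (H_k,\|\cdot\|_k)$ is continuous. For a weakly convergent sequence $\mu_n \to \mu$, I would use the reproducing property together with linearity of the Bochner integral (and Fubini) to obtain
\[
\|\KMEk(\mu_n) - \KMEk(\mu)\|_k^2 = \int_{X \times X}\!\! k\,\mathrm{d}(\mu_n \otimes \mu_n) - 2\!\int_{X \times X}\!\! k\,\mathrm{d}(\mu_n \otimes \mu) + \int_{X \times X}\!\! k\,\mathrm{d}(\mu \otimes \mu).
\]
By the standard fact that weak convergence of probability measures is preserved under products on a metric space, $\mu_n \otimes \mu_n \to \mu \otimes \mu$ and $\mu_n \otimes \mu \to \mu \otimes \mu$ weakly on $X \times X$. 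Since $k$ is bounded and continuous, each of the three integrals converges to $\int k\,\mathrm{d}(\mu \otimes \mu)$, so the right-hand side vanishes in the limit.

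The conclusion then follows because the continuous image of a compact set in a Hausdorff space is compact, and $H_k$ is Hausdorff. The only genuinely delicate ingredient is the promotion of weak convergence of the $\mu_n$ to weak convergence of the associated product measures; this is classical on metric spaces, so no serious obstacle is anticipated.
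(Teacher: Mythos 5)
Your proof is correct, and it shares the paper's outer skeleton -- $(\Pbb(X),\dkm)$ is compact, $\KMEk$ is continuous, and the continuous image of a compact set is compact -- but you establish the continuity of $\KMEk$ by a genuinely different and more self-contained argument. The paper invokes the known estimate $\MMDk(\mu,\nu)\leq\dkm(\mu,\nu)$ (valid since unit-ball elements of $H_k$ are $1$-Lipschitz with respect to $d_k$, citing \cite[Theorem~21]{sriperumbudur2010hilbert}), which immediately gives $1$-Lipschitz continuity of $\KMEk$ from $(\Pbb(X),\dkm)$ to $H_k$. You instead expand
\[
\|\KMEk(\mu_n)-\KMEk(\mu)\|_k^2=\textstyle\int k\,\mathrm{d}(\mu_n\otimes\mu_n)-2\int k\,\mathrm{d}(\mu_n\otimes\mu)+\int k\,\mathrm{d}(\mu\otimes\mu)
\]
and use weak convergence of the product measures together with boundedness and $d_k$-continuity of $k$ (which you correctly derive from the Cauchy--Schwarz estimate; note that $k$ is automatically continuous with respect to its own kernel metric). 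This yields only sequential continuity, but that suffices because the weak topology on $\Pbb(X)$ is metrized by $\dkm$, and plain continuity is all the compactness argument needs. Two small points worth making explicit: the product-convergence step needs separability of $X$, which here follows from compactness of $(X,d_k)$ (the paper's route needs separability for the cited theorem as well); and your approach trades the quantitative $1$-Lipschitz bound (which the paper gets for free from the citation) for a longer but elementary computation that avoids any external result beyond the classical product weak-convergence fact.
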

\begin{proof}
Recall that $(\Pbb(X),\dkm)$ is compact, where $\dkm$ is defined with $d_X=d_k$. Next, since $(X,d_k)$ is compact, it is separable, and hence \cite[Theorem~21]{sriperumbudur2010hilbert} implies that for all $\mu,\nu\in\Pbb(X)$
\begin{equation*}
    \MMDk(\mu,\nu)=\|\KMEk(\mu)-\KMEk(\nu)\|_k \leq \dkm(\mu,\nu),
\end{equation*}
which shows that $\KMEk: (\Pbb,\dkm)\rightarrow (H_k,\|\cdot\|_k)$ is 1-Lipschitz continuous.
Altogether, $\KMEk(\Pbb(X))$ is the image of a compact set under a continuous map, so it is compact. \QED
\end{proof}
For additional background on KMEs and MMD, see \cite{sriperumbudur2010hilbert}.

\emph{Mean field limit of functions} 
Given functions $f_M: X^M\rightarrow\R$, $M\in\Np$, and $f:\Pbb(X)\rightarrow\R$\rev{,} on a metric space $(X,d_X)$, we say that $f$ is the mean field limit of $(f_M)_M$ if $\sup_{\vec x \in X^M} |f_M(\vec x)-f(\muhat[\vec x])| \rightarrow 0$ for $M\rightarrow \infty$, and we denote this by $f_M \convmode{\Pone} f$.
The following result is well-known, cf. \cite[Lemma~1.2]{carmona2018probabilistic}.
\begin{proposition} \label{prop:mflFuncs}
Let $(X,d_X)$ be a compact metric space, and assume that 
1) $\forall M\in\Np, \vec x\in X^M, \sigma\in\Permutations_M$: $f_M(\sigma \vec x)=f_M(\vec x)$;
2) $\exists B\in\Rnn \forall M\in \Np \forall \vec x \in X^M$: $|f_M(\vec x)|\leq B$;
3) $\exists L\in\Rnn \forall M\in \Np \forall \vec x, \vec x'\in X^M$: $|f_M(\vec x)-f_M(\vec x')|\leq L\dkm(\muhat[\vec x], \muhat[\vec x'])$.
Then there exists a subsequence $(f_{M_\ell})_\ell$ and a $\dkm$-continuous $f:\Pbb(X)\rightarrow\R$ with $f_{M_\ell}\convmode{\Pone} f$ for $\ell\rightarrow\infty$.
\end{proposition}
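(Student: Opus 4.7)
The plan is to lift each $f_M$ to a function on $\Pbb(X)$ via its symmetry, extend continuously to all of $\Pbb(X)$ while preserving the Lipschitz constant, and then extract a uniformly convergent subsequence using Arzelà--Ascoli on the compact space $(\Pbb(X),\dkm)$.

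First, since $f_M$ is permutation invariant by assumption 1), the value $f_M(\vec x)$ depends only on the empirical measure $\muhat[\vec x]$. Hence there is a well-defined function $\tilde f_M : \EmpMeas{M}{X} \to \R$ with $\tilde f_M(\muhat[\vec x]) = f_M(\vec x)$. Assumptions 2) and 3) translate directly: $|\tilde f_M|\le B$ on $\EmpMeas{M}{X}$, and $\tilde f_M$ is $L$-Lipschitz with respect to $\dkm$ (restricted to $\EmpMeas{M}{X}\subset\Pbb(X)$).

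Next, I would extend each $\tilde f_M$ to an $L$-Lipschitz function $g_M:\Pbb(X)\to\R$ on the whole of $\Pbb(X)$ via the McShane--Whitney formula
\begin{equation*}
g_M(\mu) \;=\; \inf_{\nu\in\EmpMeas{M}{X}}\bigl(\tilde f_M(\nu) + L\,\dkm(\mu,\nu)\bigr),
\end{equation*}
which gives an $L$-Lipschitz extension with the same pointwise bound $B$ (after truncation if needed, or just by the standard property of the McShane extension on a bounded function). The family $\{g_M\}_{M\in\Np}$ is then uniformly bounded and equicontinuous on the compact metric space $(\Pbb(X),\dkm)$ (compactness is the fact recalled in Section \ref{sec:prelims}), so by the Arzelà--Ascoli theorem there exist a subsequence $(g_{M_\ell})_\ell$ and a continuous function $f:\Pbb(X)\to\R$ with $g_{M_\ell}\to f$ uniformly on $\Pbb(X)$; the limit $f$ is itself $L$-Lipschitz with respect to $\dkm$, hence $\dkm$-continuous.

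Finally, the mean field limit property follows because for every $\vec x\in X^{M_\ell}$ one has $f_{M_\ell}(\vec x)=\tilde f_{M_\ell}(\muhat[\vec x])=g_{M_\ell}(\muhat[\vec x])$, so
\begin{equation*}
\sup_{\vec x\in X^{M_\ell}}\bigl|f_{M_\ell}(\vec x) - f(\muhat[\vec x])\bigr|
\;\le\; \sup_{\mu\in\Pbb(X)}\bigl|g_{M_\ell}(\mu) - f(\mu)\bigr| \;\xrightarrow{\ell\to\infty}\; 0,
\end{equation*}
which is exactly $f_{M_\ell}\convmode{\Pone} f$. The only mildly delicate point is the clean extension step: one must check that the McShane formula preserves both the Lipschitz constant and (at least up to a harmless constant) the uniform bound, so that Arzelà--Ascoli is genuinely applicable; everything else is routine given compactness of $(\Pbb(X),\dkm)$.
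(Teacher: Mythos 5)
Your proposal is correct and follows essentially the same route the paper relies on (it cites \cite[Lemma~1.2]{carmona2018probabilistic} and explicitly notes the proof rests on the McShane extension): lift to $\EmpMeas{M}{X}$ by permutation invariance, extend with the McShane formula preserving the Lipschitz constant $L$, and apply Arzel\`a--Ascoli on the compact space $(\Pbb(X),\dkm)$. The only point you flag as delicate is indeed harmless: since $X$ is compact, $(\Pbb(X),\dkm)$ has finite diameter, so the McShane extension is automatically uniformly bounded by $B + L\,\mathrm{diam}(\Pbb(X))$ and Arzel\`a--Ascoli applies directly.
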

The proof of this result relies on the McShane extension, which %
requires the extended function to be real-valued.
This poses a major obstacle to obtaining an analogous result for transition functions.
To circumvent this problem, we use Kirszbraun's theorem, cf. \cite[Theorem~4.2.3]{cobzacs2019lipschitz}, which in turn requires a Hilbert space setting.
This motivates our usage of KMEs, which allows  us to prove the results of the next section.
\section{New Mean Field Limit Existence Results} \label{sec:mfl}
We now present and prove new mean field limit existence results that are tailored to discrete-time control systems. 
\subsection{Systems with control input}
Our first main result concerns functions of the form of transition functions of discrete-time dynamics with input.
We will apply this result to such systems in Section \ref{sec:dtds}.
\begin{theorem} \label{thm:mflWithInput}
Let $(X,d_X)$ be a metric space, $H$ a Hilbert space and $U\subseteq H$ compact, $k$ a measurable, bounded and characteristic kernel such that $\KMEk(\Pbb(X))$ is compact. Consider a sequence of functions $f_M: X^M \times U \rightarrow X^M$, $M\in\Np$, such that
1) $ \forall M\in\Np \: \forall \vec x \in X^M, \sigma\in\Permutations_M, u \in U$:     $\muhat[f_M(\sigma \vec x, u)]= \muhat[f_M(\vec x, u)]$;
2) $\exists L\in\Rnn \: \forall M\in\Np \forall \vec x,x'\in X^M$, $u, u'\in U$:
    \begin{align}
        & \MMDk(\muhat[f_M(\vec x, u)], \muhat[f_M(\vec x', u')]) \nonumber \\
        & \hspace{0.5cm} \leq L\left\|(\KMEk(\muhat[\vec x]),u) - (\KMEk(\muhat[\vec x']),u')\right\|_{H_k\times H},
    \end{align}
    where $\|\cdot\|_{H_k\times H}$ is the norm for the product of the Hilbert spaces $H_k$ and $H$.

Then there exist a subsequence $(f_{M_\ell})_\ell$ and an $L$-Lipschitz continuous function $F:\KMEk\left(\Pbb(X)\right) \times U \rightarrow\KMEk\left(\Pbb(X)\right)$ such that
\begin{equation}
     \lim_{\ell\rightarrow\infty} \sup_{\vec x \in X^{M_\ell}, u\in U} \|\KMEhat\left(f_{M_\ell}(\vec x, u)\right)-F(\KMEhat(\vec x), u)\|_k = 0
\end{equation}
and
\begin{equation} \label{eq:mfl:f}
     \lim_{\ell\rightarrow\infty} \sup_{\vec x \in X^{M_\ell}, u \in U} \MMDk(\muhat[f_{M_\ell}(\vec x, u)], f(\muhat[\vec x],u)) = 0,
\end{equation}
where we defined $f(\mu,u)=\KMEk^{-1}(F(\KMEk(\mu),u))$.
\end{theorem}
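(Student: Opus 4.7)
My plan is to lift the problem into the Hilbert space $H_k \times H$ using KMEs, extend each $f_M$ to the full space via Kirszbraun while preserving the Lipschitz constant, and then extract a subsequential limit by Arzelà--Ascoli. For each $M$ I would define a map $F_M: \KMEhat(X^M) \times U \to \KMEk(\Pbb(X))$ by
\[
F_M(\KMEhat(\vec x), u) := \KMEhat(f_M(\vec x, u)).
\]
This is well-defined because $k$ is characteristic: if $\KMEhat(\vec x) = \KMEhat(\vec x')$ then $\muhat[\vec x] = \muhat[\vec x']$, so $\vec x'$ is a permutation of $\vec x$, and hypothesis 1) then yields $\muhat[f_M(\vec x, u)] = \muhat[f_M(\vec x', u)]$. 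Hypothesis 2) is exactly the statement that $F_M$ is $L$-Lipschitz on its domain when viewed inside the Hilbert space $H_k \times H$.

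To set up a common compact domain for Arzelà--Ascoli, I would apply Kirszbraun's theorem (as hinted at in the preliminaries) to extend $F_M$ to an $L$-Lipschitz map $\tilde F_M: H_k \times H \to H_k$. Since Kirszbraun imposes no constraint on the range, I would then project onto $\KMEk(\Pbb(X))$, which is closed (compact by hypothesis) and convex (by the earlier lemma), so the metric projection $P: H_k \to \KMEk(\Pbb(X))$ is well-defined and $1$-Lipschitz. The composition $F_M^\star := P \circ \tilde F_M$ is then $L$-Lipschitz from $H_k \times H$ into $\KMEk(\Pbb(X))$ and still agrees with $F_M$ on the original domain, since those values already lie in the target.

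Restricting $F_M^\star$ to the compact metric space $\KMEk(\Pbb(X)) \times U$ gives an equi-Lipschitz family with compact codomain, so Arzelà--Ascoli produces a subsequence $F_{M_\ell}^\star$ converging uniformly to an $L$-Lipschitz limit $F: \KMEk(\Pbb(X)) \times U \to \KMEk(\Pbb(X))$. For every $\vec x \in X^{M_\ell}$ and $u \in U$ we have $(\KMEhat(\vec x), u) \in \KMEk(\Pbb(X)) \times U$ and $F_{M_\ell}^\star(\KMEhat(\vec x), u) = \KMEhat(f_{M_\ell}(\vec x, u))$, so the uniform convergence yields the first claimed limit. Setting $f(\mu, u) := \KMEk^{-1}(F(\KMEk(\mu), u))$ (well-defined because $F$ takes values in $\KMEk(\Pbb(X))$ and $\KMEk$ is injective on $\Pbb(X)$), the MMD in (\ref{eq:mfl:f}) equals $\|\KMEhat(f_{M_\ell}(\vec x, u)) - F(\KMEhat(\vec x), u)\|_k$, so the second claim follows from the same uniform bound.

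The main obstacle I anticipate is the tension between Kirszbraun, which extends into the whole Hilbert space with no control on the range, and Arzelà--Ascoli, which demands a (relatively) compact target. The projection step above reconciles them at zero cost in the Lipschitz constant, and this is precisely the point where the convexity of $\KMEk(\Pbb(X))$ recorded earlier is used.
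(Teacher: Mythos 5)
Your proposal is correct and follows essentially the same route as the paper's proof: lift to $H_k\times H$ via the KME (using permutation invariance and injectivity for well-definedness), extend by Kirszbraun, compose with the nonexpansive metric projection onto the compact convex set $\KMEk(\Pbb(X))$, and conclude by Arzel\`a--Ascoli, noting the projection leaves the original values fixed. The only cosmetic difference is that you define the lifted maps in one step rather than via the intermediate maps on empirical measures, which changes nothing of substance.
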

\rev{
\begin{remark}
The convergence notion \eqref{eq:mfl:f} is a direct adaption of the established concept of mean field convergence of functions as used in Proposition \ref{prop:mflFuncs}. The new notion \eqref{eq:mfl:f} is tailored to the setting of discrete-time control systems, as illustrated by the results in Section \ref{sec:dtds}.
\end{remark} }

\begin{proof}
Due to property 1) of the $f_M$, the mappings $\tilde f_M: \EmpMeas{M}{X} \times U \rightarrow \EmpMeas{M}{X}$ given by
\begin{equation*}
    \tilde f_M \left( \frac{1}{M} \sum_{m=1}^M \delta_{x_m}, u \right) = \muhat[f_M(x_1,\ldots,x_M, u)]
\end{equation*}
are well-defined for all $M\in\Np$. Furthermore, since $k$ is characteristic, the mappings
\begin{equation*}
    \tilde F_M = \KMEk \circ \tilde f_M \circ \KMEk\lvert_{\KMEk(\EmpMeas{M}{X})}^{-1}
\end{equation*}
are well-defined. Let now $M\in\Np$, $g,g'\in\KMEk(\EmpMeas{M}{X})$, $u,u'\in U$ be arbitrary, and choose $\vec x, \vec x'\in X^M$ such that
\begin{equation*}
    g = \KMEk\left(\frac{1}{M} \sum_{m=1}^M \delta_{x_m} \right), \quad
    g'=  \KMEk\left(\frac{1}{M} \sum_{m=1}^M \delta_{x'_m} \right).
\end{equation*}
We then have
\begin{align*}
    & \| \tilde F_M(g,u) - \tilde F_M(g',u') \|_k \\
    & \hspace{0.5cm} = \left\|\KMEhat\left(f_M(\vec x, u)\right) - \KMEhat\left(f_M(\vec x', u')\right)\right\|_k \\
    & \hspace{0.5cm}=   \MMDk(\muhat[f_M(\vec x, u)], \muhat[f_M(\vec x', u')]) \\
    & \hspace{0.5cm}\leq L\left\|(\KMEk(\muhat[\vec x]),u) - (\KMEk(\muhat[\vec x']),u')\right\|_{H_k\times H} \\
    & \hspace{0.5cm}= L\left\|(g,u) - (g',u')\right\|_{H_k\times H},
\end{align*}
which shows that all $\tilde F_M$ are $L$-Lipschitz continuous. 
Since $H_k\times H$ is a Hilbert space, Kirszbraun's theorem ensures that there exist $L$-Lipschitz continuous mappings $\bar F_M: H_k \times H \rightarrow H_k$ with $\bar F_M\lvert_{\KMEk(\Pbb(X))\times U} = \tilde F_M$, for all $M\in\Np$.
Recall that $\KMEk(\Pbb(X))$ is convex and by assumption also compact, so there exists the orthogonal projection $P_{\KMEk(\Pbb(X))}$ from $H_k$ onto $\KMEk(\Pbb(X))$. Define now for all $M\in\Np$ the mappings $F_M: \KMEk(\Pbb(X))\rightarrow \KMEk(\Pbb(X))$ by
\begin{equation*}
    F_M = P_{\KMEk(\Pbb(X))} \circ \bar F_M \lvert_{\KMEk(\Pbb(X)) \times U}.
\end{equation*}
We have for all $M\in\Np$, $g,g'\in\KMEk(\EmpMeas{M}{ X})$, $u,u'\in U$ that
\begin{align*}
    & \| F_M(g,u) - F_M(g',u')\|_k \\
    & \hspace{0.5cm} = \| P_{\KMEk(\Pbb(X))}\left(\bar F_M(g,u) - \bar F_M(g',u') \right) \|_k \\
    & \hspace{0.5cm}\leq \|  P_{\KMEk(\Pbb(X))} \|_{L(H_k)} \| \bar F_M(g,u) - \bar F_M(g',u') \|_k \\
    & \hspace{0.5cm}\leq  L\left\|(g,u) - (g',u')\right\|_{H_k\times H},
\end{align*}
where we used that $ \|P_{\KMEk(\Pbb(X))} \|_{L(H_k)} \leq 1$ (since $P_{\KMEk(\Pbb(X))}$ is a projection). This shows
that all $F_M$ are $L$-Lipschitz continuous. 
Since $\KMEk(\Pbb(X)) \times U$ and $\KMEk(\Pbb(X))$ are compact, we now have a sequence $(F_M)_{M\in\Np}$ of equicontinuous functions defined on a compact input set, with their range contained in a compact set, so the Arzela-Ascoli theorem asserts that there exist a subsequence $(F_{M_\ell})_{\ell}$ and an $L$-Lipschitz continuous function $F: \KMEk(\Pbb(X)) \rightarrow \KMEk(\Pbb(X))$ with
\begin{equation*}
    \lim_{\ell\rightarrow\infty} \sup_{g \in \KMEk(\Pbb(X)), u \in U} \| F_{M_\ell}(g,u) - F(g,u)\|_k \revv{=0},
\end{equation*}
which implies that
\begin{equation*}
 \lim_{\ell\rightarrow\infty} \sup_{\vec x \in X^{M_\ell}, u \in U} \| F_{M_\ell}(\revv{\KMEhat(\vec x)},u) - F(\revv{\KMEhat(\vec x)},u)\|_k \revv{=0}.
\end{equation*}
Observe that for all $M\in\Np$, $\vec x \in X^M$ and $u\in U$ we have
\begin{align*}
    F_M(\KMEk(\muhat[\vec x]), u) & = P_{\KMEk(\Pbb(X))}\left( \bar F_M(\KMEk(\muhat[\vec x]), u) \right) \\
    & = P_{\KMEk(\Pbb(X))}\left( \tilde F_M(\KMEk(\muhat[\vec x]), u) \right) \\
    & = P_{\KMEk(\Pbb(X))}\left( \KMEk(\muhat[f_M(\vec x, u)]) \right) \\
    & = \KMEk(\muhat[f_M(\vec x, u)]),
\end{align*}
which implies
\begin{equation*}
 \lim_{\ell\rightarrow\infty} \sup_{\vec x \in X^{M_\ell}, u \in U} \| \KMEhat(f_{M_\ell}(\vec x, u)) - F(\KMEhat(\vec x),u)\|_k =0,
\end{equation*}
and since $\KMEk$ is injective, we can set $f(\mu,u)=\KMEk^{-1}(F(\KMEk(\mu),u))$ and get \eqref{eq:mfl:f}.
\QED
\end{proof}

\subsection{Feedback maps}
Our second main result is concerned with functions that can be used as feedback maps for discrete-time control systems, cf. Theorem \ref{thm:mflRDP} for an example of such a situation.
\begin{theorem} \label{thm:mflFbMap}
Let $(X,d_X)$ be a metric space, $k:X\times X \rightarrow \R$ a bounded, Borel-measurable and characteristic kernel on $X$ such that $\KMEk(\Pbb(X))$ is compact, $H$ a Hilbert space and $C\subseteq H$ a compact and convex subset. Consider maps $g_M: X^M \rightarrow C$, $M\in\Np$, such that
1) $ \forall M\in\Np,\,\vec x \in X^M, \,\sigma\in\Permutations_M, \,u \in U$: $g_M(\sigma \vec x)= g_M(\vec x)$;
2) $\exists L\in\Rnn \, \forall M\in\Np, \,\vec x,x'\in X^M$:
    \begin{equation}
        \|g_M(\vec x) - g_M(\vec x') \|_H \leq L \MMDk(\muhat[\vec x], \muhat[\vec x']).
    \end{equation}

Then there exist a subsequence $(g_{M_\ell})_\ell$ and an $L$-Lipschitz continuous map $G: \KMEk(\Pbb(X)) \rightarrow C$ such that
\begin{equation}
    \lim_{\ell\rightarrow\infty} \sup_{\vec x \in X^{M_\ell}} \| g_{M_\ell}(\vec x) - G(\KMEk(\muhat[\vec x]))\|_H \rev{=0}
\end{equation}
and
\begin{equation} \label{eq:mfl:g}
    \lim_{\ell\rightarrow\infty} \sup_{\vec x \in X^{M_\ell}} \| g_{M_\ell}(\vec x) - g(\muhat[\vec x])\|_H \rev{=0},
\end{equation}
where we defined $g=G \circ \KMEk$.
\end{theorem}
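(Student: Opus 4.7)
The plan is to mimic the proof of Theorem~\ref{thm:mflWithInput} step by step, with two minor adaptations: the input variable $u$ is absent, and the codomain is the convex compact set $C\subseteq H$ instead of $\KMEk(\Pbb(X))\subseteq H_k$. First, by the permutation-invariance assumption 1), the maps $\tilde g_M:\EmpMeas{M}{X}\to C$ given by $\tilde g_M(\muhat[\vec x])=g_M(\vec x)$ are well-defined. Since $k$ is characteristic, $\KMEk$ is injective on $\Pbb(X)$, so I can set $\tilde G_M = \tilde g_M \circ \KMEk\lvert_{\KMEk(\EmpMeas{M}{X})}^{-1}$ as a map $\KMEk(\EmpMeas{M}{X})\to C$. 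Assumption 2) then translates directly into $L$-Lipschitz continuity of $\tilde G_M$ with respect to the $H_k$-norm on the domain and the $H$-norm on the codomain.

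Second, since both $H_k$ and $H$ are Hilbert spaces, Kirszbraun's theorem extends each $\tilde G_M$ to an $L$-Lipschitz map $\bar G_M: H_k\to H$. Because $C$ is closed and convex in the Hilbert space $H$, the orthogonal projection $P_C:H\to C$ exists and is $1$-Lipschitz. Setting $G_M := P_C \circ \bar G_M\lvert_{\KMEk(\Pbb(X))}$, I obtain $L$-Lipschitz maps $G_M:\KMEk(\Pbb(X))\to C$. Crucially, $\bar G_M$ already takes values in $C$ on $\KMEk(\EmpMeas{M}{X})$, so $P_C$ acts as the identity there, and $G_M(\KMEk(\muhat[\vec x]))=g_M(\vec x)$ for all $\vec x\in X^M$.

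Third, I would invoke Arzela-Ascoli for the sequence $(G_M)_M$: the domain $\KMEk(\Pbb(X))$ is compact by assumption, the codomain $C$ is compact, and the common Lipschitz constant $L$ yields equicontinuity. This produces a subsequence $(G_{M_\ell})_\ell$ converging uniformly to an $L$-Lipschitz continuous $G:\KMEk(\Pbb(X))\to C$ (the limit lies in $C$ because $C$ is closed). Combining uniform convergence with $G_{M_\ell}(\KMEk(\muhat[\vec x]))=g_{M_\ell}(\vec x)$ delivers the first convergence statement, and setting $g=G\circ\KMEk$ yields \eqref{eq:mfl:g}.

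The only nontrivial step, as in Theorem~\ref{thm:mflWithInput}, is the Kirszbraun extension: preserving the Lipschitz constant when extending a vector-valued Lipschitz map from a subset of a Hilbert space requires the codomain to be Hilbert, which is precisely why $H$ is assumed Hilbert here. Once that extension is in hand, the rest is a routine combination of the characteristic property of $k$, projection onto a closed convex set in a Hilbert space, and Arzela-Ascoli on a compact domain.
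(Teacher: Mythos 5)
Your proposal is correct and follows essentially the same route as the paper's proof: factor through empirical measures via permutation invariance and injectivity of $\KMEk$, extend with Kirszbraun, project onto the compact convex set $C$, and conclude by Arzela--Ascoli on the compact set $\KMEk(\Pbb(X))$. Your explicit observation that $P_C$ acts as the identity on $\KMEk(\EmpMeas{M}{X})$ (so $G_M(\KMEk(\muhat[\vec x]))=g_M(\vec x)$) is a point the paper leaves implicit, and it is exactly what makes the final convergence statements follow.
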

\begin{remark}
$g$ is also $L$-Lipschitz continuous as a map on $(\Pbb(X),\MMDk)$.
\end{remark}
\begin{proof}
The proof follows a similar strategy as used in the proof of Theorem \ref{thm:mflWithInput}.
Since all $g_M$ are permutation invariant, we can define the maps $ \tilde g_M: \EmpMeas{M}{X} \rightarrow C$ by
\begin{equation*}
    \tilde g_M\left(\frac{1}{M} \sum_{m=1}^M \delta_{x_m} \right)= g_M(x_1,\ldots,x_M),
\end{equation*}
and since $k$ is characteristic, we can further define $\tilde G_M = \tilde g_M \circ \KMEk^{-1}\lvert_{\KMEk(\EmpMeas{M}{X})}$.
Observe that for all $M\in\Np$ and $\vec x \in X^M$ we have by construction $\tilde G_M(\KMEk(\muhat[\vec x]))=g_M(\vec x)$.
Furthermore, for all $M\in\Np$ and $\vec x, \vec x' \in X^M$ we have
\begin{align*}
    \|\tilde G_M(\KMEhat(\vec x)) - \tilde G_M(\KMEhat(\vec x'))\|_H & = \|g_M(\vec x) - g_M(\vec x') \|_H \\
    &  \leq L\MMDk(\muhat[\vec x], \muhat[\vec x']), 
\end{align*}
so Kirszbraun's theorem ensures the existence of $L$-Lipschitz continuous maps $\bar G_M: H_k \rightarrow H$ with $\bar G_M\lvert_{\KMEk(\EmpMeas{M}{X})} = g_M$ for all $M\in\Np$.
Since $C$ is compact and convex by assumption, the orthogonal projection $P_C$ exists, and we can define $G_M=P_C \circ \bar G_M\lvert_{\KMEk(\Pbb(X))}$.
Since $\|P_C\|_{L(H)}\leq 1$, we have for all $M\in\Np$, $f_1,f_2\in \KMEk(\Pbb(X))$ that
\begin{align*}
    \|G_M(f_1)-G_M(f_2)\|_H & = \|P_C(\bar G_M(f_1) - \bar G_M(f_2))\|_H \\
    & \hspace{-0.5cm} \leq \|P_C\|_{L(H)} \| \bar G_M(f_1) - \bar G_M(f_2)\|_H \\
    & \hspace{-0.5cm}\leq L\|f_1-f_2\|_k.
\end{align*}
The sequence $(G_M)_M$ is therefore equicontinuous, defined on a compact set (since by assumption $\KMEk(\Pbb(X))$ is compact) with a compact codomain, so by the Arzela-Ascoli theorem there exist a subsequence $(G_{M_\ell})_\ell$ and an $L$-Lipschitz continuous map $G: \KMEk(\Pbb(X))\rightarrow C$ such that
\begin{equation*}
    \lim_{\ell\rightarrow\infty} \sup_{f \in \KMEk(\Pbb(X))} \| G_{M_\ell}(f) - G(f)\|_H = 0,
\end{equation*}
which implies
\begin{equation*}
    \lim_{\ell\rightarrow\infty} \sup_{\vec x \in X^{M_\ell}} \|g_{M_\ell}(\vec x)  - G(\KMEk(\muhat[\vec x]))\|_H = 0.
\end{equation*}
Finally, since $k$ is characteristic, we can \revv{set} \rev{$g=G \circ \KMEk$}, and by definition we then have \eqref{eq:mfl:g}.
\QED
\end{proof}

\section{Application to Discrete-Time Systems} \label{sec:dtds}
We now apply our existence results to discrete-time control systems, \rev{in particular, large\revv{-}scale MAS.}
First, we specify an appropriate setup that ensures the existence of the mean field dynamics \rev{and mean field stage cost.}
Next, we show that in our setting also the corresponding total cost functional has a mean field limit.
Finally, we prove a result on relaxed dynamic programming in the mean field limit.
\subsection{Setup}
Let $X\not=\emptyset$ be some set, $k: X\times X \rightarrow \R$ a kernel on $X$ and denote by $d_k$ the corresponding kernel metric. From now on, we make the following assumption.
\begin{assumption} \label{assump:space}
The (semi)metric space $(X,d_k)$ is compact, and the kernel $k$ is bounded, Borel-measurable (w.r.t. the Borel $\sigma$-algebra on $(X,d_k)$) and characteristic.
\end{assumption}
Recall that under these assumptions $\KMEk(\Pbb(X))$ is compact.
Consider now a sequence of discrete-time dynamical systems
\begin{equation}
    \decoM{\vec x_+} = \decoM{f}(\decoM{\vec x},u), \: M\in\Np
\end{equation}
with transition functions $\decoM{f}: X^M \times U \rightarrow X^M$, where $X^M$ is the state space and $U$ the input space. Given an initial state $\decoM{\vec x_0}\in X^M$ and a control input sequence $\seq{u}\in U^N$, a state-trajectory $\decoM{\vec x}(\cdot;\decoM{\vec x_0},\seq{u})$ is induced by
\begin{align*}
    \decoM{\vec x}(0;\decoM{\vec x_0},\seq{u}) & = \decoM{\vec x_0} \\
    \decoM{\vec x}(n+1;\vec x_0^{[M]},\seq{u}) & = \decoM{f}(\decoM{\vec x}(n;\decoM{\vec x_0},\seq{u}), \revv{\seqq{u}}(n)),
\end{align*}
where $n=0,\ldots,N-1$.
We make the following assumption on these systems.
\begin{assumption} \label{assump:sys}
1) $U\subseteq H$ is compact, where $H$ is a Hilbert space;
2) all $f_M$ are permutation equivariant in the state, i.e., $\forall M\in\Np, \vec x \in X^M, \sigma\in\Permutations_M, u \in U:
    \: \decoM{f}(\sigma \vec x, u)= \sigma\decoM{f}(\vec x, u)$;
3) the $\decoM{f}$ are uniformly Lipschitz-continuous, i.e., $\exists L_f\in\Rnn \: \forall M\in\Np,  \vec x,x'\in X^M$, $u, u'\in U$:
    \begin{align*}
        & \MMDk(\muhat[\decoM{f}(\vec x, u)], \muhat[\decoM{f}(\vec x', u')]) \\
        & \hspace{0.5cm} \leq L_f\left\|(\KMEhat(\vec x),u) - (\KMEhat(\vec x'),u')\right\|_{H_k\times H}.
    \end{align*}
\end{assumption}
\rev{
\begin{remark}
    Permutation equi-variance is fulfilled by essentially all common discrete-time MAS, \revv{in particular, as appearing in engineering}, cf. \cite{bullo2020lectures}.
\end{remark} }
Assumptions \ref{assump:space} and \ref{assump:sys} together allow to apply Theorem \ref{thm:mflWithInput}, so there exist a subsequence $(\deco{f}{M_m})_m$ and a map $F: \KMEk(\Pbb(X))\rightarrow\KMEk(\Pbb(X))$ such that 
\begin{equation*}
      \lim_{\ell\rightarrow\infty} \sup_{\substack{\vec x \in X^{M_m}\\ u\in U}} \|\KMEhat\left(\deco{f}{M_m}(\vec x, u)\right)-F(\KMEhat(\vec x), u)\|_k = 0.
\end{equation*}
This map is also $L_f$-Lipschitz continuous.
Since $k$ is characteristic, we can define the function $f=\KMEk^{-1}\circ F \circ \KMEk: \Pbb(X)\times U \rightarrow \Pbb(X)$, so
\begin{equation*}
      \lim_{m\rightarrow\infty} \sup_{\vec x \in X^{M_m}, u \in U} \MMDk(\muhat[\deco{f}{M_\ell}(\vec x, u)], f(\muhat[\vec x],u)) = 0.
\end{equation*}
The map $f$ induces another discrete-time dynamical system
\begin{equation}
    \mu_+ = f(\mu,u),
\end{equation}
where $\Pbb(X)$ is the state space and $U$ the input space. A given initial state $\mu_0\in\Pbb(X)$ and control sequence $\seq{u}\in U^N$ induce a state trajectory $\mu(\cdot;\mu_0,\seq{u})$ by
\begin{align*}
    \mu(0;\mu_0,\seq{u}) & = \mu_0 \\
     \mu(n+1;\mu_0,\seq{u}) & = f(\mu(n;\mu_0,\seq{u}), \revv{\seqq{u}}(n)) \: \forall n=0,\ldots,N-1.
\end{align*}

\rev{Motivated by optimal control applications,} consider a sequence of stage cost functions $\decoM{\ell}: X^M \times U \rightarrow \R$, and the associated finite-horizon total cost functionals $J_N^{[M]}: X^M \times U^N \rightarrow \R$, $N\in\Np$, defined by
\begin{equation}
    J_N^{[M]}(\vec x_0, \seq{u}) = \sum_{n=0}^{N-1} \decoM{\ell}(\decoM{\vec x}(n;\vec x_0, \seq{u}), \revv{\seqq{u}}(n)).
\end{equation}
We make the following assumption on the stage cost functions $\decoM{\ell}$.
\begin{assumption} \label{assump:stagecost}
1) All $\decoM{\ell}$ are permutation-invariant in the state variable, i.e., $\forall M\in\Np, \vec x \in X^M, \sigma\in\Permutations_M, u \in U$:    $\decoM{\ell}(\sigma \vec x, u)= \decoM{\ell}(\vec x, u)$;
2) the $\decoM{\ell}$ are uniformly bounded, i.e., $\exists B_\ell\in\Rnn \: \forall M\in\Np, \vec x\in X^M, u\in U:$ $|\decoM{\ell}(\vec x, u)|\leq B_\ell$;
3) the stage cost functions are uniformly Lipschitz-continuous, $\exists L_\ell\in\Rnn \forall M\in\Np, \vec x,x' \in X^M, u,u'\in U$
    \begin{equation*}
        |\decoM{\ell}(\vec x, u) - \decoM{\ell}(\vec x', u')| \leq L_\ell(\MMDk(\muhat[\vec x], \muhat[\vec x']) + \|u-u'\|_H).
    \end{equation*}
\end{assumption}
An inspection of the proof of Proposition \ref{prop:mflFuncs} shows that under Assumptions \ref{assump:space} and \ref{assump:stagecost} there exist a subsequence $(\deco{\ell}{M_{m_p}})_p$ and a function $\ell:\Pbb(X)\times U \rightarrow \R$ such that
\begin{equation} \label{eq:mflell}
    \lim_{p\rightarrow\infty} \sup_{\vec x \in X^{M_{m_p}}, u\in U} |\deco{\ell}{M_{m_p}}(\vec x, u) - \ell(\muhat[\vec x],u)| = 0.
\end{equation}
This can be used to define a corresponding total cost functional $J_N: \Pbb(X)\times U \rightarrow \R$, $N\in\Np$, by
\begin{equation}
    J_N(\mu_0, \seq{u}) = \sum_{n=0}^{N-1} \ell(\mu(n;\mu_0, \seq{u}), \revv{\seqq{u}}(n)).
\end{equation}

We now switch to the subsequence $(M_{m_p})_p$ and reindex by $M$ for readability, so that we can write
\begin{align}
    \lim_{M\rightarrow\infty} \sup_{\vec x \in X^M, u \in U} \MMDk(\muhat[\decoM{f}(\vec x, u)], f(\muhat[\vec x],u)) = 0 \\
    \lim_{M\rightarrow\infty} \sup_{\vec x \in X^M, u\in U} |\decoM{\ell}(\vec x, u) - \ell(\muhat[\vec x],u)| = 0.
\end{align}

\subsection{Mean field limit of $J_N^{[M]}$}
Next, we show that $J_N$ is indeed the MFL of $J_N^{[M]}$.
\begin{proposition} \label{prop:mflJ}
For all $N\in\Np$ we have
\begin{equation}
    \lim_{M\rightarrow\infty} \sup_{\substack{\vec x_0 \in X^M\\ \seq{u}\in U^N}} | J_N^{[M]}(\vec x_0, \seq{u}) - J_N(\muhat[\vec x_0],\seq{u})| = 0.
\end{equation}
\end{proposition}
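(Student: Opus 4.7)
The plan is to reduce the claim to a uniform estimate for the trajectory error in the MMD metric, and then use the Lipschitz property of $\ell$ and the uniform MFL of $\decoM{\ell}$ to sum $N$ uniformly vanishing contributions. Writing $\decoM{\vec x}(n) := \decoM{\vec x}(n;\vec x_0,\seq{u})$ and $\mu(n) := \mu(n;\muhat[\vec x_0],\seq{u})$, I first split each summand via
\begin{align*}
    & |\decoM{\ell}(\decoM{\vec x}(n), \seqq{u}(n)) - \ell(\mu(n), \seqq{u}(n))| \\
    & \quad \leq |\decoM{\ell}(\decoM{\vec x}(n), \seqq{u}(n)) - \ell(\muhat[\decoM{\vec x}(n)], \seqq{u}(n))| \\
    & \qquad + |\ell(\muhat[\decoM{\vec x}(n)], \seqq{u}(n)) - \ell(\mu(n), \seqq{u}(n))|.
\end{align*}
The first term on the right is uniformly bounded by the supremum in the MFL of $\decoM{\ell} \to \ell$ (which tends to zero), and, noting that $\ell$ inherits $L_\ell$-Lipschitz continuity from the $\decoM{\ell}$ through the pointwise limit, the second term is bounded by $L_\ell \MMDk(\muhat[\decoM{\vec x}(n)], \mu(n))$.

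The heart of the argument is then the claim that
\begin{equation*}
    \epsilon_n^{[M]} := \sup_{\vec x_0 \in X^M,\,\seq{u} \in U^N} \MMDk\bigl(\muhat[\decoM{\vec x}(n)],\, \mu(n)\bigr) \xrightarrow[M\to\infty]{} 0
\end{equation*}
for every fixed $n \in \{0,\ldots,N-1\}$, which I prove by induction on $n$. The base case is immediate since $\mu(0) = \muhat[\vec x_0]$, so $\epsilon_0^{[M]} = 0$. For the inductive step, the dynamics and the triangle inequality yield
\begin{align*}
    & \MMDk(\muhat[\decoM{\vec x}(n+1)], \mu(n+1)) \\
    & \quad \leq \MMDk\bigl(\muhat[\decoM{f}(\decoM{\vec x}(n), \seqq{u}(n))],\, f(\muhat[\decoM{\vec x}(n)], \seqq{u}(n))\bigr) \\
    & \qquad + \MMDk\bigl(f(\muhat[\decoM{\vec x}(n)], \seqq{u}(n)),\, f(\mu(n), \seqq{u}(n))\bigr).
\end{align*}
The first term is dominated by $\sup_{\vec x,u} \MMDk(\muhat[\decoM{f}(\vec x, u)], f(\muhat[\vec x],u))$, which tends to zero by the MFL of $\decoM{f}$; the second term, via the $L_f$-Lipschitz property of $F$ on $\KMEk(\Pbb(X))$ (equivalently, of $f$ measured in MMD), is bounded by $L_f\,\epsilon_n^{[M]}$. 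Taking the supremum over $(\vec x_0,\seq{u})$ therefore gives $\epsilon_{n+1}^{[M]} \leq \delta_f^{[M]} + L_f\,\epsilon_n^{[M]}$ with $\delta_f^{[M]} \to 0$, which together with $\epsilon_0^{[M]} = 0$ closes the induction.

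Combining the two ingredients, the overall error $|J_N^{[M]}(\vec x_0,\seq{u}) - J_N(\muhat[\vec x_0],\seq{u})|$ is bounded by a sum of $N$ uniformly vanishing quantities (the $N$ stage-cost-MFL terms and the $N$ Lipschitz terms $L_\ell\,\epsilon_n^{[M]}$), which proves uniform convergence. The only subtle point will be to insist, at each step of the induction, that the suprema are taken over all initial conditions and all admissible control sequences simultaneously, so that the uniform MFLs of $\decoM{f}$ and $\decoM{\ell}$ can be invoked without pointwise evaluation; this uniformity is precisely what is delivered by the MFL convergence notion of Theorem \ref{thm:mflWithInput} and by Assumption \ref{assump:stagecost}.
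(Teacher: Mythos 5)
Your proposal is correct and follows essentially the same route as the paper: the same splitting of each stage-cost term into a stage-cost MFL error plus an $L_\ell$-Lipschitz term, with the trajectory error in $\MMDk$ controlled by propagating the dynamics error with factor $L_f$. Your inductive recursion $\epsilon_{n+1}^{[M]} \leq \delta_f^{[M]} + L_f\,\epsilon_n^{[M]}$ is precisely the ``standard induction argument'' behind the paper's Lemma~\ref{lem:trajectoryBound} (whose proof is omitted there), so you have simply inlined that lemma rather than citing it.
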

For the proof we need a technical lemma.
\begin{lemma} \label{lem:trajectoryBound}
For all $M\in\Np$, $\vec x_0 \in X^M$, $N\in \Np$ and $\seq{u}\in U^N$ we have
\begin{align*}
    \MMDk(\muhat(N), \mu(N)
    & \leq
    \sum_{n=1}^N L_f^{n-1}\|\KMEk(\muhat(N-n+1)) \\
    & \hspace{1cm} - F(\muhat(N-n), \revv{\seqq{u}}(N-n))\|_k,
\end{align*} 
where we defined for brevity $\muhat(n)=\muhat[\decoM{\vec x}(n; \vec x_0, \seq{u})]$ and $\mu(n)=\mu(\rev{n};\muhat[\vec x],\seq{u})$.
\end{lemma}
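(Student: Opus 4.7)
\medskip

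The plan is to proceed by induction on $N$ (equivalently, by iterating a one-step error recursion), exploiting the fact that $F$ is $L_f$-Lipschitz and that, by the definition $f = \KMEk^{-1}\circ F \circ \KMEk$, the mean field trajectory satisfies the exact identity $\KMEk(\mu(n+1)) = F(\KMEk(\mu(n)), \seqq{u}(n))$ for every $n$. The microscopic trajectory, by contrast, only satisfies this identity up to the one-step discrepancy
\[
    \varepsilon_n \;:=\; \bigl\|\KMEk(\muhat(n+1)) - F(\KMEk(\muhat(n)), \seqq{u}(n))\bigr\|_k,
\]
which is exactly the quantity appearing in the summands on the right-hand side of the claimed bound.

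The key step is the triangle inequality together with $L_f$-Lipschitz continuity of $F$: for any $n \in \{0,\dots,N-1\}$,
\begin{align*}
    \MMDk(\muhat(n+1),\mu(n+1))
    &= \bigl\|\KMEk(\muhat(n+1)) - F(\KMEk(\mu(n)), \seqq{u}(n))\bigr\|_k \\
    &\leq \varepsilon_n + \bigl\|F(\KMEk(\muhat(n)),\seqq{u}(n)) - F(\KMEk(\mu(n)),\seqq{u}(n))\bigr\|_k \\
    &\leq \varepsilon_n + L_f\,\MMDk(\muhat(n),\mu(n)).
\end{align*}
This yields the one-step recursion $a_{n+1} \leq \varepsilon_n + L_f a_n$ for the nonnegative sequence $a_n := \MMDk(\muhat(n),\mu(n))$.

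Iterating this recursion from $n=N-1$ down to $n=0$, and using the initial condition $a_0 = 0$ (since $\muhat(0) = \muhat[\vec x_0] = \mu(0)$), gives
\[
    a_N \;\leq\; \sum_{n=1}^{N} L_f^{\,n-1}\,\varepsilon_{N-n},
\]
which is precisely the inequality of the lemma (modulo the notational shorthand $F(\muhat(\cdot),\cdot) = F(\KMEk(\muhat(\cdot)),\cdot)$ used in the statement). I do not anticipate a genuine obstacle here: the only nontrivial ingredient is the Lipschitz property of $F$, which is guaranteed by Theorem \ref{thm:mflWithInput} under Assumptions \ref{assump:space} and \ref{assump:sys}, and all remaining steps are routine telescoping.
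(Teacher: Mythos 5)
Your proof is correct and is exactly the ``standard induction argument'' that the paper invokes when it omits the proof of Lemma \ref{lem:trajectoryBound}: a one-step recursion $a_{n+1}\leq \varepsilon_n + L_f a_n$ obtained from the triangle inequality, the $L_f$-Lipschitz continuity of $F$, and the exact identity $\KMEk(\mu(n+1))=F(\KMEk(\mu(n)),\seqq{u}(n))$, then telescoped with $a_0=0$. No gaps; this matches the intended argument.
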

This result can be shown using a standard induction argument, and hence the proof is omitted.

\emph{Proof (of Proposition \ref{prop:mflJ}):}
Let $N\in\Np$, $M\in\Np$, $\vec x_0\in X^M$ and $\seq{u}\in U^N$ be arbitrary, and define $\vec x(n)=\decoM{\vec x}(n;\vec x_0, \seq{u})$ and $\muhat(n)=\muhat[\vec x(n)]$, then we have
\begin{align*}
    & | J_N^{[M]}(\vec x_0, \seq{u}) - J_N(\muhat[\vec x_0],\seq{u})| \\
    & \leq \sum_{n=0}^{N-1} |\decoM{\ell}(\vec x(n),\revv{\seqq{u}}(n)) -  \ell(\mu(n;\muhat[\vec x_0], \seq{u}), \revv{\seqq{u}}(n))| \\
    & \leq \sum_{n=0}^{N-1} |\decoM{\ell}(\vec x(n),\revv{\seqq{u}}(n)) - \ell(\muhat(n), \revv{\seqq{u}}(n))| \\
        & \hspace{1cm} + |\ell(\muhat(n), \revv{\seqq{u}}(n)) -  \ell(\mu(n;\muhat[\vec x_0], \seq{u}), \revv{\seqq{u}}(n))|.
\end{align*}
Since $\ell$ is $L_\ell$-Lipschitz continuous, we have for all $n=0,\ldots,N-1$ that
\begin{align*}
     & |\ell(\muhat(n), \revv{\seqq{u}}(n)) -  \ell(\mu(n;\muhat[\vec x_0], \seq{u}), \revv{\seqq{u}}(n))| \\
     & \hspace{0.5cm} \leq L_\ell \MMDk(\muhat[\decoM{\vec x}(n;\vec x_0, \seq{u})], \mu(n;\muhat[\vec x_0], \seq{u})) \\
     & \hspace{0.5cm} \leq L_\ell  \sum_{i=1}^n L_f^{i-1}\|\KMEhat(\vec x(n-i+1)) \\
     & \hspace{1cm} - F(\muhat(n-i), \revv{\seqq{u}}(n-i))\|_k,
\end{align*}
where we used Lemma \ref{lem:trajectoryBound} in the second inequality.

Combining these bounds results in
\begin{align*}
    & \sup_{\substack{\vec x_0 \in X^M\\ \seq{u}\in U^N}} | J_N^{[M]}(\vec x_0, \seq{u}) - J_N(\muhat[\vec x_0],\seq{u})| \\
    & \leq  \sup_{\substack{\vec x_0 \in X^M\\ \seq{u}\in U^N}} \sum_{n=0}^{N-1} |\decoM{\ell}(\vec x(n),\revv{\seqq{u}}(n)) - \ell(\muhat(n), \revv{\seqq{u}}(n))| \\
        & \hspace{0.5cm} + \sup_{\substack{\vec x_0 \in X^M\\ \seq{u}\in U^N}} \sum_{n=0}^{N-1}  L_\ell  \sum_{i=1}^n  L_f^{i-1}\|\KMEk(\muhat(n-i+1)) \\
        & \hspace{1cm} - F(\muhat(n-i), \revv{\seqq{u}}(n-i))\|_k, \\
    & \leq \sum_{n=0}^{N-1}  \sup_{\substack{\vec x_0 \in X^M\\ \seq{u}\in U^N}}  |\decoM{\ell}(\vec x(n), \revv{\seqq{u}}(n)) - \ell(\muhat(n), \revv{\seqq{u}}(n))| \\
        & \hspace{0.5cm} + \sum_{n=0}^{N-1} \sum_{i=1}^{n-1} L_\ell L_f^{i-1} \sup_{\substack{\vec x_0 \in X^M\\ \seq{u}\in U^N}}  \|\KMEk(\muhat(n-i+1)) \\
        & \hspace{1cm} - F(\muhat(n-i), \revv{\seqq{u}}(n-i))\|_k \\
    & \rightarrow 0 \quad \text{ for } M\rightarrow \infty.
\end{align*} 
This concludes the proof.\QED

\subsection{Relaxed dynamic programming}
Relaxed dynamic programming has been used frequently in the analysis of NMPC.
We now present a mean field limit variant thereof, which can be used to derive performance bounds for mean field NMPC.
This result generalizes \rev{\cite[Proposition~1]{herty2017performance}} to a wide class of systems and feedback maps.
\begin{theorem} \label{thm:mflRDP}
Assume that $U$ is convex. Consider $\tilde V_M: X^M\rightarrow \Rnn$, $M\in\Np$, such that
1) $ \forall M\in\Np, \vec x \in X^M, \sigma\in\Permutations_M:    \: \tilde V_M(\sigma \vec x)= \tilde V_M(\vec x)$;
2) $\exists L_{\tilde V}\in\Rnn \: \forall M\in\Np, \vec x,x'\in X^M$:
    \begin{equation}
        |\tilde V_M(\vec x) - \tilde V_M(\vec x')| \leq L_{\tilde V_M} \MMDk(\muhat[\vec x], \muhat[\vec x']).
    \end{equation}
Let $\kappa_M: X^M\rightarrow U$, $M\in\Np$, such that
1) $ \forall M\in\Np, \forall \vec x \in X^M, \sigma\in\Permutations_M, u \in U:    \: \kappa_M(\sigma\vec x)= \kappa_M(\vec x)$;
2) $\exists L_\kappa\in\Rnn \: \forall M\in\Np, \vec x,x'\in X^M$:
    \begin{equation}
        \|\kappa_M(\vec x) - \kappa_M(\vec x') \|_H \leq L_\kappa \MMDk(\muhat[\vec x], \muhat[\vec x']).
    \end{equation}

Assume that there exists $\alpha\in(0,1]$ such that for all $M\in\Np$ and $\vec x \in X^M$ we have
\begin{equation}
    \tilde V_M(\vec x) \geq \tilde V_M(\decoM{f}(\vec x, \kappa_M(\vec x)) + \alpha \decoM{\ell}(\vec x, \kappa_M(\vec x)).
\end{equation}
Then there exists a strictly increasing sequence $(M_m)_m$, an $L_{\tilde V}$-Lipschitz continuous function $\tilde V: (\Pbb(X),\MMDk) \rightarrow \Rnn$ and a map $\kappa:\Pbb(X)\rightarrow H$ such that for all $\mu\in\Pbb(X)$ we have
\begin{equation} \label{eq:mflRDP}
    \tilde V(\mu) \geq \tilde V(f(\mu,\kappa(\mu))) + \alpha \ell(\mu, \kappa(\mu)).
\end{equation}
\end{theorem}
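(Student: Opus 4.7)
The strategy is to extract a single subsequence along which all four objects $\tilde V_M$, $\kappa_M$, $\deco{f}{M}$, $\deco{\ell}{M}$ admit mean field limits, and then to pass to the limit in the relaxed dynamic programming inequality by approximating an arbitrary $\mu\in\Pbb(X)$ by empirical measures.

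Starting from the subsequence $(M_m)_m$ already fixed in the setup subsection (so that $\deco{f}{M_m}\to f$ and $\deco{\ell}{M_m}\to\ell$ uniformly in the sense of \eqref{eq:mfl:f} and \eqref{eq:mflell}), I would first apply Proposition \ref{prop:mflFuncs} to $(\tilde V_{M_m})_m$. Permutation invariance is given, the MMD-Lipschitz bound implies $\dkm$-Lipschitz continuity via \cite[Theorem~21]{sriperumbudur2010hilbert}, and uniform boundedness follows from this Lipschitz bound combined with compactness of $\KMEk(\Pbb(X))$ (possibly after incorporating a uniform bound on the $\tilde V_{M}$'s as a standing assumption). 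This yields, after extracting a further subsequence still denoted $(M_m)_m$, a $\dkm$-continuous limit $\tilde V:\Pbb(X)\to\Rnn$ with $\tilde V_{M_m}\convmode{\Pone}\tilde V$; a triangle-inequality argument using the uniform MFL convergence and the $\MMDk$-Lipschitz bound then shows that $\tilde V$ is $L_{\tilde V}$-Lipschitz with respect to $\MMDk$. Next, since $U$ is compact and convex inside the Hilbert space $H$, I would apply Theorem \ref{thm:mflFbMap} with $C=U$ to $(\kappa_{M_m})_m$, extracting once more to obtain an $L_\kappa$-Lipschitz $\kappa=G\circ\KMEk:\Pbb(X)\to U$ with $\sup_{\vec x\in X^{M_m}}\|\kappa_{M_m}(\vec x)-\kappa(\muhat[\vec x])\|_H\to 0$.

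For the limit passage, I would fix $\mu\in\Pbb(X)$ and choose $\vec x_m\in X^{M_m}$ with $\MMDk(\muhat[\vec x_m],\mu)\to 0$, which is possible because empirical measures are dense in $(\Pbb(X),\MMDk)$ on the compact space $(X,d_k)$. Writing the relaxed DP inequality for $\vec x_m$,
\begin{equation*}
\tilde V_{M_m}(\vec x_m)\geq \tilde V_{M_m}\bigl(\deco{f}{M_m}(\vec x_m,\kappa_{M_m}(\vec x_m))\bigr)+\alpha\,\deco{\ell}{M_m}\bigl(\vec x_m,\kappa_{M_m}(\vec x_m)\bigr),
\end{equation*}
I would pass to the limit termwise. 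The convergence $\tilde V_{M_m}(\vec x_m)\to\tilde V(\mu)$ follows by adding and subtracting $\tilde V(\muhat[\vec x_m])$ and using the uniform MFL of $\tilde V_M$ together with continuity of $\tilde V$; the same trick gives $\kappa_{M_m}(\vec x_m)\to\kappa(\mu)$ in $H$. Combining these with the uniform MFL of $\deco{f}{M}$ and the Lipschitz continuity of $f$ yields $\MMDk(\muhat[\deco{f}{M_m}(\vec x_m,\kappa_{M_m}(\vec x_m))],f(\mu,\kappa(\mu)))\to 0$ via a triangle inequality, and an analogous argument for the stage cost gives $\deco{\ell}{M_m}(\vec x_m,\kappa_{M_m}(\vec x_m))\to\ell(\mu,\kappa(\mu))$. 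One further triangle inequality, combining the just-established MMD convergence of the forward state with the uniform MFL of $\tilde V_M$ and the MMD-Lipschitz continuity of $\tilde V$, produces $\tilde V_{M_m}(\deco{f}{M_m}(\vec x_m,\kappa_{M_m}(\vec x_m)))\to\tilde V(f(\mu,\kappa(\mu)))$. Sending $m\to\infty$ in the displayed inequality then yields \eqref{eq:mflRDP}.

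The main obstacle is the nested composition $\tilde V_{M_m}\circ\deco{f}{M_m}(\cdot,\kappa_{M_m}(\cdot))$: three $M$-dependent maps are evaluated at a state $\vec x_m$ that itself varies with $M$, so pointwise convergence of any single ingredient is insufficient. What makes the argument go through is that all four MFL extractions deliver \emph{uniform} convergence together with Lipschitz-continuous limits, so the errors can be systematically shifted onto the limiting objects via repeated triangle inequalities; the remaining technicalities are keeping the subsequences compatible across the four extractions and exhibiting an empirical-measure approximation of $\mu$ with exactly $M_m$ atoms along the extracted indices.
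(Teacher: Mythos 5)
Your proposal follows essentially the same route as the paper's proof: extract compatible subsequences via Theorem \ref{thm:mflFbMap} for $(\kappa_M)_M$ and an inspection of Proposition \ref{prop:mflFuncs} for $(\tilde V_M)_M$ on top of the already-fixed subsequence for $\deco{f}{M}$ and $\deco{\ell}{M}$, approximate $\mu$ by empirical measures $\muhat[\vec x_{M_m}]$, and pass to the limit in the relaxed dynamic programming inequality using uniform MFL convergence together with Lipschitz continuity of the limit objects (the paper just makes your ``termwise limit'' explicit through an $\epsilon$-bookkeeping with $\hat\epsilon,\epsilon_{\tilde V},\epsilon_\kappa,\epsilon_\ell,\epsilon_f$). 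Your caveat about needing uniform boundedness of the $\tilde V_M$ to invoke Proposition \ref{prop:mflFuncs} is a fair observation, but it concerns a point the paper itself passes over with ``an inspection of the proof reveals that it applies,'' so it does not constitute a deviation from the paper's argument.
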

\begin{proof}
Under the given assumptions, Theorem \ref{thm:mflFbMap} is applicable to $(\kappa_M)_M$, so there exist a subsequence $(\kappa_{M_p})_p$ and an $L_\kappa$-Lipschitz continuous map $\kappa: \Pbb(X)\rightarrow U$ such that $\sup_{\vec x \in X^{M_p}} \|\kappa_{M_p}(\vec x) - \kappa(\muhat[\vec x])\|_H \rightarrow 0$ for $p\rightarrow \infty$.
Since $k$ is characteristic, we can define $\kappa=K\circ \KMEk^{-1}\lvert_{\KMEk(\Pbb(X))}$.
An inspection of the proof Proposition \ref{prop:mflFuncs} reveals that it applies to $(\tilde V_{M_p})_p$, so there exists a subsequence $(M_{p_m})_m$ and a function $\tilde V: \Pbb(X)\rightarrow\Rnn$ such that
$\tilde V_{p_m} \convmode{\Pone} \tilde V$ for $m\rightarrow\infty$,
and for all $\mu,\mu'\in \Pbb(X)$ 
we also have $ |\tilde V(\mu) - \tilde V(\mu')| \leq L_{\tilde V}\MMDk(\mu,\mu').$
To simplify the notation, we denote $(M_{p_m})_m$ by $(M_m)_m$ from now on.
Let now $\mu\in\Pbb(X)$ and $\epsilon>0$ be arbitrary. There exists $\vec x_M\in X^M$ such that $\MMDk(\muhat[\vec x_M],\mu)\rightarrow 0$.
Define $\hat \epsilon = \epsilon/5(L_{\tilde V}(1+L_f(1+L_\kappa)) + \alpha L_\ell(1+L_\kappa))$ and choose $m\in\Np$ such that
\begin{align*}
    \MMDk(\muhat[\vec x],\mu) & \leq \hat\epsilon \\
    \sup_{\vec x \in X^{M_m}} |\tilde V_{M_m}(\vec x) - \tilde V(\muhat[\vec x])| & \leq \frac{\epsilon}{10} =: \epsilon_{\tilde V} \\
    \sup_{\vec x \in X^{M_m}} \|\kappa_{M_m}(\vec x) - \kappa(\muhat[\vec x])\|_H & \leq  \epsilon/5(L_{\tilde V}L_f + \alpha L_\ell) =: \epsilon_\kappa \\
    \sup_{\substack{\vec x \in X^{M_m}\\ u \in U}} |\deco{\ell}{M_m}(\vec x, u) - \ell(\muhat[\vec x], u)| & \leq \epsilon/5\alpha =:\epsilon_\ell \\
\end{align*}
and
\begin{equation*}
    \sup_{\substack{\vec x \in X^{M_m}\\ u \in U}} \| \KMEhat(\deco{f}{M_m}(\vec x,u)) - F(\KMEhat(\vec x),u)\|_k \leq \epsilon/5L_{\tilde V} = : \epsilon_f.
\end{equation*}
We now have
\begin{align*}
& \| \kappa(\mu) - \kappa_{M_m}(\vec x_{M_m})\|_H = \| \kappa(\mu) - \kappa(\muhat[\vec x_{M_m}]) \|_H \\
    & \hspace{0.5cm} + \|\kappa(\muhat[\vec x_{M_m}]) - \kappa_{M_m}(\vec x_{M_m})\|_H \\
& \leq L_\kappa \MMDk(\muhat[\vec x_{M_m}],\mu) + \epsilon_\kappa 
\leq L_\kappa \hat\epsilon + \epsilon_\kappa
\end{align*}
and 
\begin{align*}
    & \MMDk\left(\muhat\left[\deco{f}{M_m}(\vec x_{M_m},\kappa_{M_m}(\vec x_{M_m}))\right], f(\mu,\kappa(\mu))\right) \\
    & \leq \| F(\mu,\kappa(\mu)) - F(\muhat[\vec x_{M_m}], \kappa_{M_m}(\vec x_{M_m})) \|_k \\
        & \hspace{0.5cm} + \|  F(\muhat[\vec x_{M_m}], \kappa_{M_m}(\vec x_{M_m})) \\
        & \hspace{0.5cm} - \KMEk\left(\deco{f}{M_m}(\vec x_{M_m},\kappa_{M_m}(\vec x_{M_m}))\right)\|_k \\
    & \leq L_f\left\|(\KMEk(\mu),\kappa(\mu)) - (\KMEk(\muhat[\vec x_{M_m}], \kappa_{M_m}(\vec x_{M_m})) \right\|_{H_k\times H} \\
        & \hspace{0.5cm} +  \sup_{\substack{\vec x \in X^{M_m}\\ u \in U}} \| \KMEk(\deco{f}{M_m}(\vec x,u)) - F(\KMEk(\muhat[\vec x]),u)\|_k \\
    & \leq L_f\left(\MMDk(\muhat[\vec x_{M_m}], \mu) + \|\kappa_{M_m}(\vec x_{M_m}) - \kappa(\mu)\|_H \right) + \epsilon_f \\
    & \leq L_f\hat\epsilon + L_f( L_\kappa \hat\epsilon + \epsilon_\kappa) + \epsilon_f
\end{align*}
as well as $\tilde V(\mu)\geq  \tilde V_{M_m}(\vec x_{M_m}) - \epsilon_{\tilde V} - L_{\tilde V}\hat\epsilon$.
Finally,
\begin{align*}
\tilde V(\mu) %
& \geq \tilde V_{M_m}(\vec x_{M_m}) - \epsilon_{\tilde V} - L_{\tilde V}\hat\epsilon \\
& \geq \tilde V_{M_m}(\deco{f}{M_m}(\vec x_{M_m},\kappa_{M_m}(\vec x_{M_m}))) \\
    & \hspace{0.5cm} + \alpha\deco{\ell}{M_m}(\vec x_{M_m}, \kappa_{M_m}(\vec x_{M_m}))  - \epsilon_{\tilde V} - L_{\tilde V}\hat\epsilon \\
& \geq \tilde V(\muhat[\deco{f}{M_m}(\vec x_{M_m},\kappa_{M_m}(\vec x_{M_m}))]) \\
    & \hspace{0.5cm} + \alpha\ell(\muhat[\vec x_{M_m}], \kappa_{M_m}(\vec x_{M_m})) - 2\epsilon_{\tilde V} - \alpha\epsilon_\ell - L_{\tilde V}\hat\epsilon \\
& \geq  \tilde V(f(\mu,\kappa(\mu))) + \alpha\ell(\mu,\kappa(\mu))  \\
    & \hspace{0.5cm} - L_{\tilde V}\left(L_f\hat\epsilon + L_f( L_\kappa \hat\epsilon + \epsilon_\kappa) + \epsilon_f\right) \\
    & \hspace{0.5cm} - \alpha L_\ell\left(\hat\epsilon +  L_\kappa \hat\epsilon + \epsilon_\kappa\right)
    - 2\epsilon_{\tilde V} - \alpha\epsilon_\ell - L_{\tilde V}\hat\epsilon \\
& =  \tilde V(f(\mu,\kappa(\mu))) + \alpha\ell(\mu,\kappa(\mu)) - \epsilon.
\end{align*}
Since $\epsilon>0$ was arbitrary, we get \eqref{eq:mflRDP}.\QED
\end{proof}

\section{Conclusion} \label{sec:conclusion}
We established the first MFL existence results for general deterministic discrete-time dynamical systems and corresponding feedback maps,
\rev{using KMEs as a new tool in this context.}
We applied the existence results in an optimal control setting, %
and established an MFL variant of the relaxed dynamic programming principle.
Our results are very general and henceforth applicable to a wide variety of discrete-time systems, in particular, discrete-time MAS as arising in many fields of science and engineering\rev{, e.g., robotics.}
The existence results in Section \ref{sec:mfl} provide a rigorous foundation for applying mean field approaches to such MAS, and hence form a theoretical basis for discrete-time mean field control.
Ongoing work is concerned with optimal control applications, in particular, NMPC %
\rev{for} large\revv{-}scale MAS.

\section*{Acknowledgements}
We thank P.-F. Massiani, \revv{F. Solowjow, and the reviewers for detailed and helpful comments}.

\bibliographystyle{IEEEtran} 
\bibliography{refs.bib} 

\begin{thebibliography}{10}
\providecommand{\url}[1]{#1}
\csname url@samestyle\endcsname
\providecommand{\newblock}{\relax}
\providecommand{\bibinfo}[2]{#2}
\providecommand{\BIBentrySTDinterwordspacing}{\spaceskip=0pt\relax}
\providecommand{\BIBentryALTinterwordstretchfactor}{4}
\providecommand{\BIBentryALTinterwordspacing}{\spaceskip=\fontdimen2\font plus
\BIBentryALTinterwordstretchfactor\fontdimen3\font minus
  \fontdimen4\font\relax}
\providecommand{\BIBforeignlanguage}[2]{{%
\expandafter\ifx\csname l@#1\endcsname\relax
\typeout{** WARNING: IEEEtran.bst: No hyphenation pattern has been}%
\typeout{** loaded for the language `#1'. Using the pattern for}%
\typeout{** the default language instead.}%
\else
\language=\csname l@#1\endcsname
\fi
#2}}
\providecommand{\BIBdecl}{\relax}
\BIBdecl

\bibitem{bellomo2017active}
N.~Bellomo, P.~Degond, and E.~Tadmor, \emph{Active Particles, Volume 1:
  Advances in Theory, Models, and Applications}.\hskip 1em plus 0.5em minus
  0.4em\relax Birkh{\"a}user, 2017.

\bibitem{bullo2020lectures}
F.~Bullo, \emph{Lectures on network systems}.\hskip 1em plus 0.5em minus
  0.4em\relax Kindle Direct Publishing, 2022.

\bibitem{cui2023scalable}
K.~Cui, M.~Li, C.~Fabian, and H.~Koeppl, ``Scalable task-driven robotic swarm
  control via collision avoidance and learning mean-field control,'' in
  \emph{2023 IEEE International Conference on Robotics and Automation
  (ICRA)}.\hskip 1em plus 0.5em minus 0.4em\relax IEEE, 2023, pp. 1192--1199.

\bibitem{pareschi2013interacting}
L.~Pareschi and G.~Toscani, \emph{Interacting multiagent systems: kinetic
  equations and {M}onte {C}arlo methods}.\hskip 1em plus 0.5em minus
  0.4em\relax OUP Oxford, 2013.

\bibitem{naldi2010mathematical}
G.~Naldi, L.~Pareschi, and G.~Toscani, \emph{Mathematical modeling of
  collective behavior in socio-economic and life sciences}.\hskip 1em plus
  0.5em minus 0.4em\relax Springer Science \& Business Media, 2010.

\bibitem{golse2016dynamics}
F.~Golse, ``On the dynamics of large particle systems in the mean field
  limit,'' \emph{Macroscopic and large scale phenomena: coarse graining, mean
  field limits and ergodicity}, pp. 1--144, 2016.

\bibitem{herty2015mean}
M.~Herty, L.~Pareschi, and S.~Steffensen, ``Mean-field control and {R}iccati
  equations.'' \emph{Networks \& Heterogeneous Media}, vol.~10, no.~3, 2015.

\bibitem{fornasier2019mean}
M.~Fornasier, S.~Lisini, C.~Orrieri, and G.~Savar{\'e}, ``Mean-field optimal
  control as {G}amma-limit of finite agent controls,'' \emph{European Journal
  of Applied Mathematics}, vol.~30, no.~6, pp. 1153--1186, 2019.

\bibitem{cucker2007emergent}
F.~Cucker and S.~Smale, ``Emergent behavior in flocks,'' \emph{IEEE
  Transactions on automatic control}, vol.~52, no.~5, pp. 852--862, 2007.

\bibitem{yin2021convergence}
X.~Yin, Z.~Gao, D.~Yue, and Y.~Fu, ``Convergence of velocities for the short
  range communicated discrete-time {C}ucker--{S}male model,''
  \emph{Automatica}, vol. 129, p. 109659, 2021.

\bibitem{herty2017performance}
M.~Herty and M.~Zanella, ``Performance bounds for the mean-field limit of
  constrained dynamics,'' \emph{Discrete and Continuous Dynamical Systems},
  vol.~37, no.~4, p. 2023, 2017.

\bibitem{grune2017nonlinear}
L.~Gr{\"u}ne and J.~Pannek, \emph{Nonlinear model predictive control}.\hskip
  1em plus 0.5em minus 0.4em\relax Springer, 2017.

\bibitem{steinwart2008support}
I.~Steinwart and A.~Christmann, \emph{Support vector machines}.\hskip 1em plus
  0.5em minus 0.4em\relax Springer Science \& Business Media, 2008.

\bibitem{dudley2018real}
R.~M. Dudley, \emph{Real analysis and probability}.\hskip 1em plus 0.5em minus
  0.4em\relax Cambridge University Press, 2018.

\bibitem{sriperumbudur2011universality}
B.~K. Sriperumbudur, K.~Fukumizu, and G.~R. Lanckriet, ``Universality,
  characteristic kernels and rkhs embedding of measures.'' \emph{Journal of
  Machine Learning Research}, vol.~12, no.~7, 2011.

\bibitem{sriperumbudur2010hilbert}
B.~K. Sriperumbudur, A.~Gretton, K.~Fukumizu, B.~Sch{\"o}lkopf, and G.~R.
  Lanckriet, ``Hilbert space embeddings and metrics on probability measures,''
  \emph{The Journal of Machine Learning Research}, vol.~11, pp. 1517--1561,
  2010.

\bibitem{carmona2018probabilistic}
R.~Carmona, F.~Delarue \emph{et~al.}, \emph{Probabilistic theory of mean field
  games with applications I-II}.\hskip 1em plus 0.5em minus 0.4em\relax
  Springer, 2018.

\bibitem{cobzacs2019lipschitz}
{\c{S}}.~Cobza{\c{s}}, R.~Miculescu, A.~Nicolae \emph{et~al.}, \emph{Lipschitz
  functions}.\hskip 1em plus 0.5em minus 0.4em\relax Springer, 2019.

\end{thebibliography}

\end{document}